\documentclass[a4paper,12pt,reqno]{amsart}

\usepackage{soul,graphicx,amssymb,datetime,float,MnSymbol,currfile,tikz,hyperref,multirow}
\usepackage[round]{natbib}
\usepackage[foot]{amsaddr}
\usetikzlibrary{arrows}
\usetikzlibrary{decorations.markings}

\newtheorem{theorem}{Theorem}

\def\ci{\!\perp\!}

\newcommand{\comments}[1]{}

\tikzset{tt/.style={decoration={
  markings,
  mark=at position .485 with {\arrow{>}},
  mark=at position .515 with {\arrow{<}}},postaction={decorate}}}

\begin{document}

\title[]{Simple yet Sharp Sensitivity Analysis for Unmeasured Confounding}

\author{Jose M. Pe\~{n}a$^1$}
\address{$^1$Link\"oping University, Sweden.}
\email{jose.m.pena@liu.se}


\maketitle

\begin{abstract}
We present a method for assessing the sensitivity of the true causal effect to unmeasured confounding. The method requires the analyst to set two intuitive parameters. Otherwise, the method is assumption-free. The method returns an interval that contains the true causal effect, and whose bounds are arbitrarily sharp, i.e. practically attainable. We show experimentally that our bounds can be tighter than those obtained by the method of \citet{DingandVanderWeele2016a} which, moreover, requires to set one more parameter than our method. Finally, we extend our method to bound the natural direct and indirect effects when there are measured mediators and unmeasured exposure-outcome confounding.
\end{abstract}

\begin{figure}[t]
\centering
\begin{tabular}{c|c}
\begin{tikzpicture}[inner sep=1mm]
\node at (0,0) (E) {$E$};
\node at (2,0) (D) {$D$};
\node at (1,1) (U) {$U$};
\path[->] (E) edge (D);
\path[->] (U) edge (E);
\path[->] (U) edge (D);
\end{tikzpicture}
&
\begin{tikzpicture}[inner sep=1mm]
\node at (0,0) (E) {$E$};
\node at (1.5,0) (Z) {$Z$};
\node at (3,0) (D) {$D$};
\node at (1.5,1) (U) {$U$};
\path[->] (E) edge [bend right] (D);
\path[->] (E) edge (Z);
\path[->] (Z) edge (D);
\path[->] (U) edge (E);
\path[->] (U) edge (D);
\end{tikzpicture}
\end{tabular}\caption{Causal graphs where $U$ is unmeasured.}\label{fig:graphs}
\end{figure}
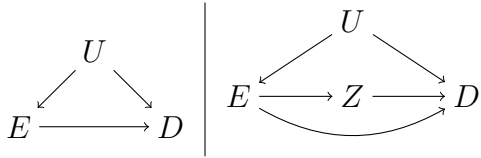

\section{Introduction}

Unmeasured confounding may bias the estimation of the true causal effect. One way to address this problem is through sensitivity analysis, i.e. reporting one or several intervals that include the true causal effect and whose bounds are functions of certain sensitivity parameter values provided by the analyst. These parameters are usually meant to quantify the association of the unmeasured confounders with the exposure and outcome. \citet{DingandVanderWeele2016a}, hereafter DV, proposed a method for sensitivity analysis that has received considerable attention, as evidenced by the survey by \citet{Blumetal.2020}. See also the follow-up works by \citet{DingandVanderWeele2016b}, \citet{VanderWeeleandDing2017}, \citet{VanderWeeleetal.2019} and \citet{Sjolander2020}. The latter shows that DV's interval bounds are not always sharp or attainable, i.e. logically possible.

In this work, we introduce a new method for sensitivity analysis. Our method requires the analyst to set two sensitivity parameters. This is one parameter less than DV's method. Otherwise, our method is assumption-free. We derive the feasible region for our parameters and show that, unlike DV's, our interval's bounds are arbitrarily sharp, i.e. practically attainable. Moreover, we show through simulations that our bounds can be tighter than DV's. This suggests that it may be advantageous to combine DV's and our method, by computing both sets of bounds and reporting the tightest of them.

Our sensitivity analysis method includes the parameter-free method proposed by \citet{Sjolander2020}, hereafter AS, as a special case. Specifically, AS' interval is the broadest our method can return: It is returned when the analyst chooses the least informative values for our sensitivity parameters. In other words, our bounds are always tighter than AS'. Like \citet{DingandVanderWeele2016a} and \citet{Sjolander2020}, we only consider binary outcomes. AS' bounds coincide with Manski's bounds for binary outcomes \citep{Manski1990}. Our bounds, on the other hand, can be seen as an adaptation of Manski's bounds for non-binary outcomes to binary outcomes. We elaborate further on this later. 

The rest of the paper is organized as follows. Section \ref{sec:RR} presents our method for sensitivity analysis of the risk ratio. Section \ref{sec:RD} extends it to the risk difference. Section \ref{sec:conditioning} extends our method to the risk ratio/difference conditioned or averaged over measured covariates. Sections \ref{sec:real} and \ref{sec:simulations} illustrate our method on real and simulated data. Section \ref{sec:mediation} considers the case where the effect of the exposure on the outcome is mediated by measured covariates, and our method is adapted to bound the natural direct and indirect effects under exposure-outcome confounding. Finally, Section \ref{sec:discussion} closes with some discussion.

\section{Bounds on the Risk Ratio}\label{sec:RR}

Consider the causal graph to the left in Figure \ref{fig:graphs}, where $E$ denotes the exposure, $D$ denotes the outcome, and $U$ denotes the set of unmeasured confounders. Let $E$ and $D$ be binary random variables. For simplicity, we assume that $U$ is a categorical random vector, but our results also hold for ordinal and continuous confounders. For simplicity, we treat $U$ as a categorical random variable whose levels are the Cartesian product of the levels of the components of the original $U$. We use upper-case letters to denote random variables, and the same letters in lower-case to denote their values. 

The causal graph to the left in Figure \ref{fig:graphs} represents a non-parametric structural equation model with independent errors, which defines a joint probability distribution $p(D,E,U)$. We make the usual positivity assumption that if $p(U=u) > 0$ then $p(E=e|U=u) > 0$, i.e. $E$ is not a deterministic function of $U$ and, thus, every individual in the subpopulations defined by the confounder can possibly be exposed or not \citep{HernanandRobins2020}. Then, the true risk ratio is defined as
\begin{equation}\label{eq:RRtrue}
RR^{true} = \frac{p(D_1=1)}{p(D_0=1)}
\end{equation}
where $D_e$ denotes the counterfactual outcome when the exposure is set to level $E=e$. Since there is no confounding besides $U$, we have that $D_e \ci E | U$ for all $e$ and, thus, we can write
\[
RR^{true} = \frac{\sum_u p(D=1 | E=1, U=u) p(U=u)}{\sum_u p(D=1 | E=0, U=u) p(U=u)}
\]
using first the law of total probability, then $D_e \ci E | U$ and, finally, the law of counterfactual consistency, i.e. $E=e \Rightarrow D_e = D$. This quantity is incomputable though. The observed risk ratio is defined as
\begin{equation}\label{eq:RRobs}
RR^{obs} = \frac{p(D=1 | E=1)}{p(D=1 | E=0)}
\end{equation}
which is computable. However, $RR^{true}$ and $RR^{obs}$ do not coincide in general. In this section, we give bounds on $RR^{true}$ in terms of the observed data distribution and two sensitivity parameters.

We start by noting that
\begin{align}\label{eq:counterfactual1}\nonumber
p(D_1=1) & = p(D_1=1 | E=1) p(E=1) + p(D_1=1 | E=0) p(E=0)\\
& = p(D=1 | E=1) p(E=1) + p(D_1=1 | E=0) p(E=0)
\end{align}
where the second equality follows from counterfactual consistency, and likewise
\begin{equation}\label{eq:counterfactual2}
p(D_0=1) = p(D_0=1 | E=1) p(E=1) + p(D=1 | E=0) p(E=0).
\end{equation}
If the analyst is able to confidently provide bounds on $p(D_1=1 | E=0)$ and $p(D_0=1 | E=1)$, then these can be used together with the observed data distribution to bound $RR^{true}$ via Equations \ref{eq:counterfactual1} and \ref{eq:counterfactual2}. However, bounding counterfactual probabilities directly may be difficult in some domains, else the analyst might bound Equation \ref{eq:RRtrue} directly. Therefore, we propose instead to bound them in terms of $p(D|E,U)$. Specifically,
\begin{align*}
p(D_1=1 | E=0) & = \sum_u p(D_1=1 | E=0, U=u) p(U=u | E=0)\\
& = \sum_u p(D=1 | E=1, U=u) p(U=u | E=0)\\
& \leq \max_{e,u} p(D=1 | E=e, U=u)
\end{align*}
where the second equality follows from $D_e \ci E | U$ for all $e$, and counterfactual consistency. Likewise,
\[
p(D_1=1 | E=0) \geq \min_{e,u} p(D=1 | E=e, U=u).
\]
We believe that bounding the counterfactual probabilities by specifying these maximum and minimum probabilities may be easier in some domains than bounding the counterfactual probabilities directly, e.g. in domains where the identity of the confounders is known but their values are not, or in domains where neither the identity nor the values of the confounders are known but where there is a consensus on conservative estimates of the maximum and minimum probabilities (we elaborate further on conservative estimates later). This is also the motivation behind DV's method, as it requires the analyst to quantify the relationship between $E$ and $U$ and the relationship between $U$ and $D$. See Appendix B for a recap of DV's sensitivity analysis.

Now, let us define
\[
M=\max_{e,u} p(D=1 | E=e, U=u)
\]
and
\[
m=\min_{e,u} p(D=1 | E=e, U=u).
\]
Then,
\begin{align}\nonumber
p(D=1, E=1) + p(E=0) m &\leq p(D_1=1)\\\label{eq:numerator}
& \leq p(D=1, E=1) + p(E=0) M
\end{align}
and
\begin{align}\nonumber
p(D=1, E=0) + p(E=1) m &\leq p(D_0=1)\\\label{eq:denominator}
& \leq p(D=1, E=0) + p(E=1) M.
\end{align}
Therefore, combining Equations \ref{eq:RRtrue}, \ref{eq:numerator} and \ref{eq:denominator}, we have that
\begin{equation}\label{eq:bounds}
LB \leq RR^{true} \leq UB
\end{equation}
where
\[
LB = \frac{p(D=1, E=1) + p(E=0) m}{p(D=1, E=0) + p(E=1) M}
\]
and
\[
UB = \frac{p(D=1, E=1) + p(E=0) M}{p(D=1, E=0) + p(E=1) m}.
\]

$M$ and $m$ are two sensitivity parameters whose values the analyst has to specify. By definition, these values must lie in the interval $[0,1]$ and $M \geq m$. The observed data distribution constrains the valid values further. To see it, note that
\[
p(D=1 | E=e) = \sum_u p(D=1 | E=e, U=u) p(U=u | E=e) \leq M
\]
for all $e$ and, likewise,
\[
p(D=1 | E=e) \geq m.
\]
Let us define
\[
M^*=\max_{e} p(D=1 | E=e)
\]
and
\[
m^*=\min_{e} p(D=1 | E=e).
\]
Then,
\[
M^* \leq M
\]
and
\[
m^* \geq m.
\]
We can thus define the feasible region for $M$ and $m$ as $M^* \leq M \leq 1$ and $0 \leq m \leq m^*$.

We close this section with some observations about the bounds $LB$ and $UB$. Theorem \ref{the:attainableRR} in Appendix A shows that the bounds are arbitrarily sharp, meaning that there is a distribution arbitrarily close to the observed data distribution (and, thus, indistinguishable in practice on the basis of a finite sample) for which $RR^{true}$ and $LB$ or $UB$ are arbitrarily close. So, the bounds are arbitrarily close to being logically possible. Recall that DV's bounds are not always logically possible \citep{Sjolander2020}. Note also that
\[
LB \leq \frac{p(E=1) M + p(E=0) m}{p(E=0) m + p(E=1) M} = 1
\]
and, likewise, $UB \geq 1$. Thus, our interval in Equation \ref{eq:bounds} always includes the null causal effect $RR^{true}=1$. Since our bounds are arbitrarily sharp, the null causal effect is practically attainable whenever our lower or upper bound equals 1. DV's interval does not necessarily include the null causal effect. However, when DV's lower or upper bound equals 1, the null causal effect is also attainable \citep{Sjolander2020}. The fact that our interval always includes the null causal effect and DV's may not does not mean that the latter are always closer to $RR^{true}$, as the experiments in Section \ref{sec:simulations} show.

DV's method requires the analyst to describe the relationship between $E$ and $U$ with two sensitivity parameters and the relationship between $U$ and $D$ with one parameter, whereas our method requires the analyst to describe only the relationship between $U$ and $D$ with two parameters. Therefore, our method has one parameter less than DV's method. As a consequence of not describing the relationship between $E$ and $U$, our interval always includes the null causal effect. In other words, the undescribed relationship between $E$ and $U$ may be so strong as to nullify the causal effect, i.e. explain away the observed association between $E$ and $D$. If we define the variation in $p(D=1|E,U)$ as $M-m$, then the pair of values $M=M^*$ and $m=m^*$ can be interpreted as the minimum variation in $p(D=1|E,U)$ that is needed to nullify the causal effect, regardless of the relationship between $E$ and $U$. This resembles the interpretation of DV's E-value, which is precisely defined as the minimum values of DV's parameters that nullify the causal effect. See Appendix B for a recap of DV's sensitivity analysis. There is, however, a major difference between both interpretations. DV's parameter values that are smaller than the E-value are insufficient to nullify the causal effect. There is no analogous result for our method, since we cannot consider less variation in $p(D=1|E,U)$ than $M^*-m^*$. In other words, no parameter values are insufficient to nullify the causal effect because our interval always includes the null causal effect in order to be valid regardless of the relationship between $E$ and $U$.

Note that $LB$ is decreasing in $M$ and increasing in $m$, while the opposite is true for $UB$. Therefore, using conservative estimates of $M$ and $m$ (i.e., a value larger than the true $M$ value and a value smaller than the true $m$ value) results in a wider interval that still contains $RR^{true}$. Note also that AS' bounds are a special case of our bounds when $M=1$ and $m=0$. See Appendix C for a recap of AS' sensitivity analysis. Therefore, our bounds are always tighter than AS', because our interval is widest when $M=1$ and $m=0$.

Note also that if $RR^{obs} \geq 1$, then $M^*=p(D=1 | E=1)$ and $m^*=p(D=1 | E=0)$ and, thus, $LB=1$ when we set $M=M^*$ and $m=m^*$. Likewise, $UB=1$ if $RR^{obs} \leq 1$ and we set $M=M^*$ and $m=m^*$.

Finally, our method requires to specify two probabilities whereas DV's method requires to specify three probability ratios. Which set of parameter values the analyst finds easier to specify may well depend on the domain under study. So, we will not argue in favour of either of them. However, we do want to describe a hypothetical scenario where setting our parameters may be easier. Let the causal graph to the left in Figure \ref{fig:graphs} model the effect of exercise ($E$) on cholesterol ($D$) when confounded by junior vs. senior age ($U$). The three random variables are binary, and $U$ is unmeasured. Suppose that, although the exact probabilities are unknown, it is known that $p(D=1|E=1,U=u) < p(D=1|E=0,U=u)$ for $u \in \{0,1\}$, and $p(D=1|E=e,U=1) > p(D=1|E=e,U=0)$ for $e \in \{0,1\}$. In other words, exercise decreases the probability of high cholesterol among juniors and seniors, and seniority increases the probability of high cholesterol among those who do not do and do exercise. In other words, these relationships show no qualitative effect modification by one covariate when keeping the other fixed. \citet{OgburnandVanderWeele2012a} argue that such relationships are common in epidemiology. Then, our sensitivity parameters simplify to $M=p(D=1|E=0,U=1)$ and $m=p(D=1|E=1,U=0)$, whereas DV's parameter $RR_{UD}$ reduces to
\[
RR_{UD} = \max \bigg\{ \frac{p(D=1|E=0,U=1)}{p(D=1|E=0,U=0)}, \frac{p(D=1|E=1,U=1)}{p(D=1|E=1,U=0)} \bigg\}.
\]
Suppose that most juniors exercise whereas most seniors do not. Then, the analyst may find easier setting our parameters than DV's, since the latter involves speculating about the rare cases of seniors who exercise and juniors who do not.

\section{Bounds on the Risk Difference}\label{sec:RD}

\citet{DingandVanderWeele2016a} and \citet{Sjolander2020} show that their bounds on the risk ratio can be adapted to bound the risk difference. Ours can also be adapted, as we show next. The true risk difference is defined as
\begin{align*}
RD^{true} &= p(D_1=1) - p(D_0=1)\\
&= \sum_u p(D=1 | E=1, U=u) p(U=u)\\
&- \sum_u p(D=1 | E=0, U=u) p(U=u).
\end{align*}
Therefore, combining Equations \ref{eq:numerator} and \ref{eq:denominator}, we have that
\begin{equation}\label{eq:bounds2}
LB^{\dagger} \leq RD^{true} \leq UB^{\dagger}
\end{equation}
with
\[
LB^{\dagger} = p(D=1, E=1) + p(E=0) m - p(D=1, E=0) - p(E=1) M
\]
and
\[
UB^{\dagger} = p(D=1, E=1) + p(E=0) M - p(D=1, E=0) - p(E=1) m.
\]

Theorem \ref{the:attainableRD} in Appendix A shows that our bounds for $RD^{true}$ are arbitrarily sharp. Finally, see Appendix D for an account of the relationship between our bounds and Manski's bounds \citep{Manski1990}.

\section{Conditioning and Averaging over Measured Covariates}\label{sec:conditioning}

So far, our results have concerned the whole population. However, they also hold for the subpopulation $C=c$ where $C$ is a set of measured covariates, provided that the causal graph to the left in Figure \ref{fig:graphs} is valid in that subpopulation. Note that $U$ previously represented all the confounders between $E$ and $D$, while it now represents all the confounders for the subpopulation $C=c$. To adapt our results to the subpopulation $C=c$, it suffices to condition on $C=c$ in all the previous expressions. For instance, the true risk ratio in the subpopulation $C=c$ is defined as
\begin{equation}\label{eq:RRtruec}
RR^{true}_c = \frac{p(D_1=1|C=c)}{p(D_0=1|C=c)}.
\end{equation}
Arguing as before, we have that
\[
LB_c \leq RR^{true}_c \leq UB_c
\]
where
\[
LB_c = \frac{p(D=1, E=1|C=c) + p(E=0|C=c) m_c}{p(D=1, E=0|C=c) + p(E=1|C=c) M_c}
\]
and
\[
UB_c = \frac{p(D=1, E=1|C=c) + p(E=0|C=c) M_c}{p(D=1, E=0|C=c) + p(E=1|C=c) m_c}
\]
with sensitivity parameters
\[
M_c=\max_{e,u} p(D=1 | E=e, U=u, C=c)
\]
and
\[
m_c=\min_{e,u} p(D=1 | E=e, U=u, C=c)
\]
and feasible region $M_c^* \leq M_c \leq 1$ and $0 \leq m_c \leq m^*_c$, where
\[
M^*_c=\max_{e} p(D=1 | E=e, C=c)
\]
and
\[
m^*_c=\min_{e} p(D=1 | E=e, C=c).
\]

Finally, we show that $RR^{true}$ can be bounded as
\begin{equation}\label{eq:boundsavg}
\min_c LB_c \leq RR^{true} \leq \max_c UB_c
\end{equation}
by averaging over $C$ in the numerator and denominator of Equation \ref{eq:RRtruec}. Specifically, assume for simplicity that $C$ is categorical. Then,
\begin{align*}
RR^{true} &= \frac{p(D_1=1)}{p(D_0=1)}\\
&= \frac{\sum_c p(D_1=1|C=c) p(C=c)}{\sum_c p(D_0=1|C=c) p(C=c)}\\
&= \frac{\sum_c RR^{true}_c p(D_0=1|C=c) p(C=c)}{\sum_c p(D_0=1|C=c) p(C=c)}
\end{align*}
which implies the desired result. Analogous results can be derived for the true risk difference. We omit the details. These derivations have previously been reported for DV's bounds \citep[eAppendix 2.5]{DingandVanderWeele2016a}. We include them here for completeness. Which of the bounds in Equations \ref{eq:bounds} and \ref{eq:boundsavg} is tightest depends of the sensitivity parameter values chosen. Of course, the analyst has to set more parameters in the latter case. In some domains, it may be reasonable to assume that some parameters are approximately constant across subpopulations.

\section{Real Data Example}\label{sec:real}

In this section, we illustrate our method for sensitivity analysis on the real data provided by \citet{HammondandHorn1958}. This work studied the association between smoking and mortality. \citet{DingandVanderWeele2016a} and \citet{Sjolander2020} also used these data to illustrate their methods. Specifically, we use the same data as \citeauthor{Sjolander2020}, which correspond to the association between smoking and total mortality, and for which $RR^{obs}=1.28$. See the work by \citeauthor{Sjolander2020} for a detailed description of the data. We extend the R code provided by \citeauthor{Sjolander2020} with our method. The resulting code is available \href{https://www.dropbox.com/s/4rxfux9tt95ldjz/sensitivityAnalysis3.R?dl=0}{here}.

\begin{table}[t]
\caption{Intervals for different values of the sensitivity parameters $M$ and $m$ in the feasible region $M^* \leq M \leq 1$ and $0 \leq m \leq m^*$. Recall that $M=\max_{e,u} p(D=1 | E=e, U=u)$, $m=\min_{e,u} p(D=1 | E=e, U=u)$, $M^*=\max_{e} p(D=1 | E=e)$, and $m^*=\min_{e} p(D=1 | E=e)$. In this case, $M^*=0.12$ and $m^* = 0.1$.}\label{tab:intervals}
\centering
\scriptsize
\begin{tabular}{c|l|ccccc|}
\multicolumn{2}{c}{} & \multicolumn{5}{c}{$M$}\\\cline{3-7}
\multicolumn{2}{c|}{} & 0.12 & 0.34 & 0.56 & 0.78 & 1\\\cline{2-7}
\multirow{5}{*}{$m$} & 0.1 & (1.00, 1.28) & (0.41, 1.76) & (0.26, 2.25) & (0.19, 2.73) & (0.15, 3.22)\\
& 0.07 & (0.96, 1.59) & (0.39, 2.19) & (0.25, 2.79) & (0.18, 3.40) & (0.14, 4.00)\\
& 0.05 & (0.91, 2.10) & (0.37, 2.89) & (0.23, 3.69) & (0.17, 4.49) & (0.13, 5.29)\\ 
& 0.02 & (0.87, 3.09) & (0.35, 4.27) & (0.22, 5.45) & (0.16, 6.63) & (0.13, 7.80)\\
& 0 & (0.82, 5.90) & (0.34, 8.15) & (0.21, 10.4) & (0.15, 12.6) & (0.12, 14.9)\\\cline{2-7}
\end{tabular}
\end{table}

\begin{figure}[t]
\centering
\includegraphics[scale=.7]{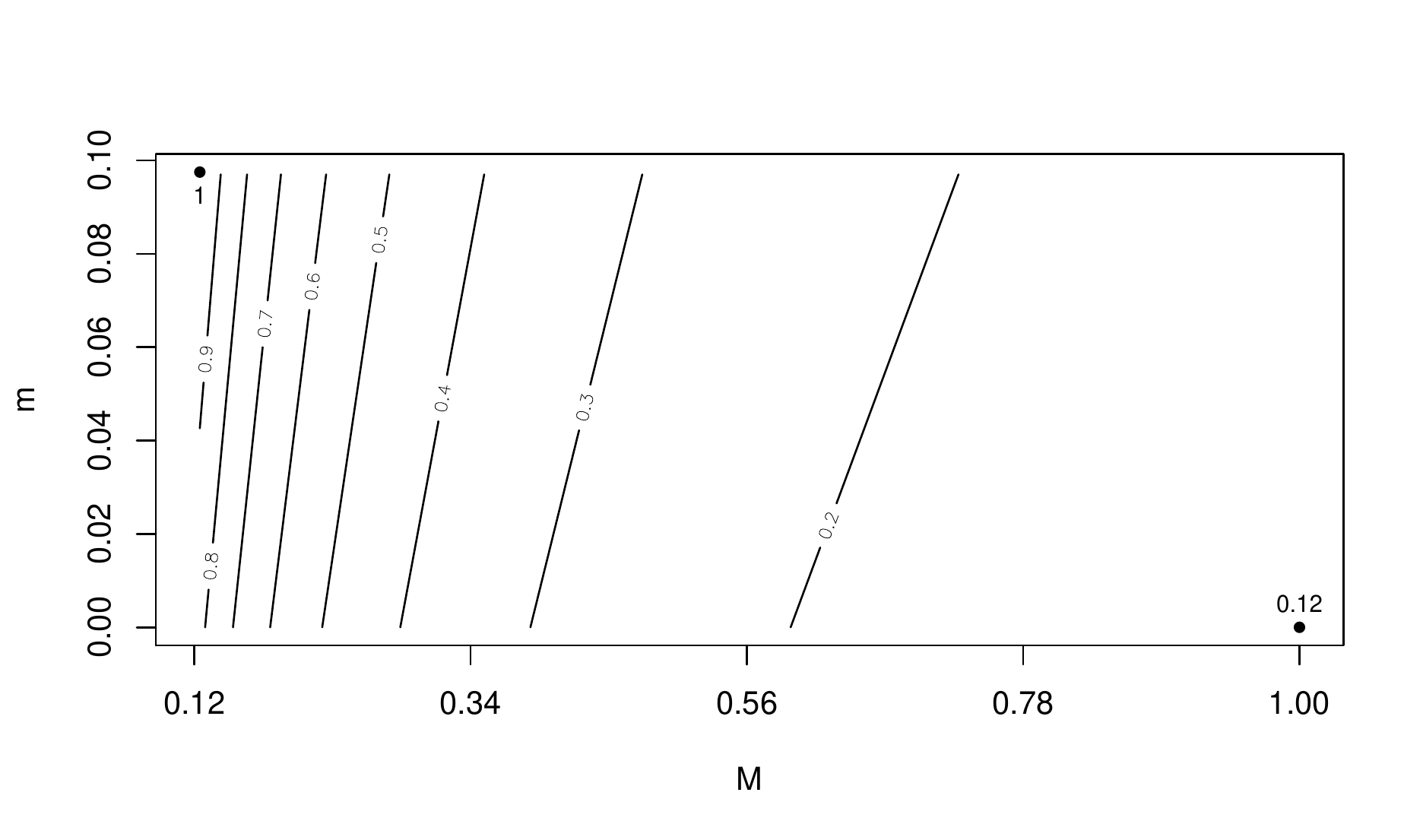}
\caption{Contour plot of the lower bound $LB$ as a function of the sensitivity parameters $M$ and $m$ in the feasible region $M^* \leq M \leq 1$ and $0 \leq m \leq m^*$. Recall that $M=\max_{e,u} p(D=1 | E=e, U=u)$, $m=\min_{e,u} p(D=1 | E=e, U=u)$, $M^*=\max_{e} p(D=1 | E=e)$, and $m^*=\min_{e} p(D=1 | E=e)$. In this case, $M^*=0.12$ and $m^* = 0.1$.}\label{fig:plots}
\end{figure}

Table \ref{tab:intervals} displays our interval for different $M$ and $m$ values in the feasible region $0.12 = M^* \leq M \leq 1$ and $0 \leq m \leq m^* = 0.1$. Figure \ref{fig:plots} complements the table with the contour plot of $LB$ as a function of $M$ and $m$. A similar plot can be produced for $UB$. An analyst can use the table and plot to determine a lower and/or upper bound for $RR^{true}$, given plausible values of $M$ and $m$. The table and plot illustrate some of the observations made before. Specifically, the null causal effect $RR^{true}=1$ is included in all the intervals. The lower bound of the intervals is decreasing in $M$ and increasing in $m$, where the opposite is true for the upper bound. The narrowest interval is achieved when $M=M^*$ and $m=m^*$, and the widest when $M=1$ and $m=0$. The lower bound of the narrowest interval is 1, because $RR^{obs} \geq 1$. Moreover, all the bounds in the table and plot are arbitrarily sharp (see Theorem \ref{the:attainableRR} in Appendix A).

\section{Simulated Experiments}\label{sec:simulations}

\citet{Sjolander2020} compares DV's and AS' bounds through simulations. In this section, we add our bounds to the comparison by extending the R code provided by \citeauthor{Sjolander2020}. The resulting code is available \href{https://www.dropbox.com/s/4rxfux9tt95ldjz/sensitivityAnalysis3.R?dl=0}{here}. Therefore, we follow \citeauthor{Sjolander2020} and consider a single binary confounder $U$, and generate distributions $p(D, E, U)$ from the model
\begin{align*}
p(E = 1) &= \text{expit}(\phi)\\
p(U = 1|E) &= \text{expit}(\alpha + \beta E)\\
p(D = 1|E, U) &= \text{expit}(\gamma + \delta E + \psi U)
\end{align*}
where $\text{expit}(x) = 1/(1+\exp(-x))$ is the inverse logit (a.k.a. logistic) function, and $\{\beta, \delta, \psi\}$ are independently distributed as $N(0,\sigma^2)$. We consider $\sigma=1, 3$ in the experiments. Note that $\sigma$ determines the probability of having confounding and causal effects of large magnitude. When this occurs, the sensitivity parameters may take large values, which results in wide intervals. In the experiments, the parameters $\{\phi, \alpha, \gamma\}$ are set to obtain certain marginal probabilities $\{p(U=1), p(E=1), p(D=1)\}$ specified below. For each combination of parameters, we generate 1000 distributions $p(D, E, U)$ from the model above.

Tables \ref{tab:simulation1} and \ref{tab:simulation2} summarize the results. Our bounds are more conservative than DV's but less than AS', as it can be seen in the columns $\tilde{\Delta}$, $\overline{\Delta}$ and $\Delta$. DV's bounds are usually tighter than AS' and ours, as it can be appreciated in the columns $\tilde{p}$ and $\overline{p}$. However, our bounds are tighter than DV's in a substantial fraction of the runs for some settings, e.g. see $\overline{p}$ for the upper bound with $\sigma=3$. We do not compare AS' and our bounds directly because, as discussed before, our bounds are always tighter than AS'. A plausible explanation of why DV's bounds are usually tighter than ours is that the former include information about the association between $E$ and $U$ through one of the sensitivity parameters, while the latter do not. A plausible explanation of why our bounds are sometimes tighter than DV's is the following: When the confounding and causal effects are large in magnitude (something that is more likely to occur with $\sigma=3$ than with $\sigma=1$), DV's parameters may take large values and, thus, DV's intervals may be too wide, since DV's bounds are not always sharp. This is less of a problem for our intervals as they cannot be arbitrarily wide, because our bounds are arbitrarily sharp. Therefore, if possible, it may be advantageous to compute both DV's and our bounds and report the tightest of them.

\begin{table}[t]
\caption{Simulation results with $\sigma = 1$. $\tilde{p}$ and $\overline{p}$ are the proportions of times that AS' bounds and our bounds are tighter than DV’s bounds, respectively. $\tilde{\Delta}$, $\overline{\Delta}$ and $\Delta$ are the mean absolute distance between the log of the bound and the log of the true risk ratio for AS' bounds, our bounds and DV’s bounds, respectively.}\label{tab:simulation1}
\centering
\scriptsize
\begin{tabular}{ccc|ccccc|ccccc}
&&&\multicolumn{5}{c}{lower bound}&\multicolumn{5}{c}{upper bound}\\
$p(U=1)$ & $p(E=1)$ & $p(D=1)$ & $\tilde{p}$ & $\overline{p}$ & $\tilde{\Delta}$ & $\overline{\Delta}$ & $\Delta$ & $\tilde{p}$ & $\overline{p}$ & $\tilde{\Delta}$ & $\overline{\Delta}$ & $\Delta$\\
\hline
0.05  & 0.05  & 0.05     & 0  & 0.10  & 3.67  & 0.79  & 0.15     & 0   & 0.16   & 3.04   & 0.71   & 0.17\\
0.05  & 0.05  & 0.20     & 0  & 0.12  & 3.18  & 0.62  & 0.13     & 0   & 0.16   & 1.71   & 0.59   & 0.15\\
0.05  & 0.20  & 0.05     & 0  & 0.10  & 3.23  & 0.73  & 0.13     & 0   & 0.15   & 3.10   & 0.72   & 0.17\\
0.05  & 0.20  & 0.20     & 0  & 0.12  & 2.22  & 0.59  & 0.13     & 0   & 0.14   & 1.76   & 0.58   & 0.13\\
0.20  & 0.05  & 0.05     & 0  & 0.06  & 3.68  & 0.78  & 0.14     & 0   & 0.13   & 3.05   & 0.67   & 0.17\\
0.20  & 0.05  & 0.20     & 0  & 0.08  & 3.18  & 0.65  & 0.12     & 0   & 0.15   & 1.71   & 0.59   & 0.16\\
0.20  & 0.20  & 0.05     & 0  & 0.04  & 3.26  & 0.77  & 0.14     & 0   & 0.10   & 3.07   & 0.69   & 0.17\\
0.20  & 0.20  & 0.20     & 0  & 0.06  & 2.23  & 0.63  & 0.13     & 0   & 0.14   & 1.71   & 0.55   & 0.14
\end{tabular}
\end{table}

\begin{table}[t]
\caption{Simulation results with $\sigma = 3$. $\tilde{p}$ and $\overline{p}$ are the proportions of times that AS' bounds and our bounds are tighter than DV’s bounds, respectively. $\tilde{\Delta}$, $\overline{\Delta}$ and $\Delta$ are the mean absolute distance between the log of the bound and the log of the true risk ratio for AS' bounds, our bounds and DV’s bounds, respectively.}\label{tab:simulation2}
\centering
\scriptsize
\begin{tabular}{ccc|ccccc|ccccc}
&&&\multicolumn{5}{c}{lower bound}&\multicolumn{5}{c}{upper bound}\\
$p(U=1)$ & $p(E=1)$ & $p(D=1)$ & $\tilde{p}$ & $\overline{p}$ & $\tilde{\Delta}$ & $\overline{\Delta}$ & $\Delta$ & $\tilde{p}$ & $\overline{p}$ & $\tilde{\Delta}$ & $\overline{\Delta}$ & $\Delta$\\
\hline
0.05 & 0.05 & 0.05 & 0.00 & 0.14 & 3.78 & 1.97 & 0.59 & 0.13  & 0.33  & 3.39  & 1.78  & 0.82\\
0.05 & 0.05 & 0.20 & 0.01 & 0.17 & 3.18 & 1.43 & 0.50 & 0.21  & 0.36  & 2.16  & 1.44  & 0.70\\
0.05 & 0.20 & 0.05 & 0.01 & 0.18 & 3.60 & 1.87 & 0.50 & 0.05  & 0.24  & 3.61  & 1.88  & 0.77\\
0.05 & 0.20 & 0.20 & 0.05 & 0.22 & 2.35 & 1.34 & 0.56 & 0.12  & 0.25  & 2.24  & 1.46  & 0.58\\
0.20 & 0.05 & 0.05 & 0.00 & 0.11 & 3.93 & 2.07 & 0.59 & 0.12  & 0.34  & 3.44  & 1.67  & 0.78\\
0.20 & 0.05 & 0.20 & 0.00 & 0.13 & 3.25 & 1.60 & 0.49 & 0.25  & 0.40  & 2.19  & 1.47  & 0.78\\
0.20 & 0.20 & 0.05 & 0.00 & 0.12 & 3.81 & 2.06 & 0.59 & 0.05  & 0.28  & 3.57  & 1.78  & 0.76\\
0.20 & 0.20 & 0.20 & 0.03 & 0.17 & 2.42 & 1.49 & 0.56 & 0.19  & 0.38  & 2.13  & 1.40  & 0.72\\
\end{tabular}
\end{table}

\begin{table}[t]
\caption{Simulation results with $\sigma = 1$, and parameter values that are 15 \% more conservative than the true values. $\tilde{p}$ and $\overline{p}$ are the proportions of times that AS' bounds and our bounds are tighter than DV’s bounds, respectively. $\tilde{\Delta}$, $\overline{\Delta}$ and $\Delta$ are the mean absolute distance between the log of the bound and the log of the true risk ratio for AS' bounds, our bounds and DV’s bounds, respectively.}\label{tab:simulation1conservative}
\centering
\scriptsize
\begin{tabular}{ccc|ccccc|ccccc}
&&&\multicolumn{5}{c}{lower bound}&\multicolumn{5}{c}{upper bound}\\
$p(U=1)$ & $p(E=1)$ & $p(D=1)$ & $\tilde{p}$ & $\overline{p}$ & $\tilde{\Delta}$ & $\overline{\Delta}$ & $\Delta$ & $\tilde{p}$ & $\overline{p}$ & $\tilde{\Delta}$ & $\overline{\Delta}$ & $\Delta$\\
\hline
0.05 & 0.05 & 0.05 & 0 & 0.07 & 3.67 & 0.95 & 0.23 & 0.00  & 0.13  & 3.04  & 0.85  & 0.26\\
0.05 & 0.05 & 0.20 & 0 & 0.08 & 3.18 & 0.78 & 0.21 & 0.01  & 0.12  & 1.71  & 0.73  & 0.23\\
0.05 & 0.20 & 0.05 & 0 & 0.06 & 3.23 & 0.88 & 0.21 & 0.00  & 0.12  & 3.10  & 0.86  & 0.26\\
0.05 & 0.20 & 0.20 & 0 & 0.08 & 2.22 & 0.74 & 0.21 & 0.00  & 0.10  & 1.76  & 0.72  & 0.21\\
0.20 & 0.05 & 0.05 & 0 & 0.00 & 3.68 & 0.94 & 0.22 & 0.00  & 0.10  & 3.05  & 0.81  & 0.26\\
0.20 & 0.05 & 0.20 & 0 & 0.01 & 3.18 & 0.80 & 0.20 & 0.01  & 0.10  & 1.71  & 0.73  & 0.24\\
0.20 & 0.20 & 0.05 & 0 & 0.01 & 3.26 & 0.92 & 0.22 & 0.00  & 0.08  & 3.07  & 0.83  & 0.25\\
0.20 & 0.20 & 0.20 & 0 & 0.01 & 2.23 & 0.78 & 0.21 & 0.00  & 0.09  & 1.71  & 0.69  & 0.23
\end{tabular}
\end{table}

\begin{table}[t]
\caption{Simulation results with $\sigma = 3$, and parameter values that are 15 \% more conservative than the true values. $\tilde{p}$ and $\overline{p}$ are the proportions of times that AS' bounds and our bounds are tighter than DV’s bounds, respectively. $\tilde{\Delta}$, $\overline{\Delta}$ and $\Delta$ are the mean absolute distance between the log of the bound and the log of the true risk ratio for AS' bounds, our bounds and DV’s bounds, respectively.}\label{tab:simulation2conservative}
\centering
\scriptsize
\begin{tabular}{ccc|ccccc|ccccc}
&&&\multicolumn{5}{c}{lower bound}&\multicolumn{5}{c}{upper bound}\\
$p(U=1)$ & $p(E=1)$ & $p(D=1)$ & $\tilde{p}$ & $\overline{p}$ & $\tilde{\Delta}$ & $\overline{\Delta}$ & $\Delta$ & $\tilde{p}$ & $\overline{p}$ & $\tilde{\Delta}$ & $\overline{\Delta}$ & $\Delta$\\
\hline
0.05 & 0.05 & 0.05 & 0.01 & 0.14 & 3.78 & 2.10 & 0.71 & 0.14  & 0.30  & 3.39  & 1.91  & 0.94\\
0.05 & 0.05 & 0.20 & 0.01 & 0.15 & 3.18 & 1.56 & 0.61 & 0.27  & 0.37  & 2.16  & 1.55  & 0.82\\
0.05 & 0.20 & 0.05 & 0.01 & 0.17 & 3.60 & 2.01 & 0.62 & 0.06  & 0.23  & 3.61  & 2.02  & 0.89\\
0.05 & 0.20 & 0.20 & 0.06 & 0.21 & 2.35 & 1.45 & 0.67 & 0.14  & 0.25  & 2.24  & 1.58  & 0.69\\
0.20 & 0.05 & 0.05 & 0.00 & 0.07 & 3.93 & 2.20 & 0.70 & 0.14  & 0.32  & 3.44  & 1.81  & 0.90\\
0.20 & 0.05 & 0.20 & 0.00 & 0.08 & 3.25 & 1.72 & 0.61 & 0.29  & 0.39  & 2.19  & 1.59  & 0.90\\
0.20 & 0.20 & 0.05 & 0.01 & 0.09 & 3.81 & 2.19 & 0.70 & 0.06  & 0.25  & 3.57  & 1.91  & 0.88\\
0.20 & 0.20 & 0.20 & 0.04 & 0.13 & 2.42 & 1.61 & 0.68 & 0.21  & 0.35  & 2.13  & 1.52  & 0.84
\end{tabular}
\end{table}

The experiments above assume that the analyst knows the true sensitivity parameter values, which is rarely the case. More realistic experiments make use of parameter values that are more conservative than the true values. Following \citeauthor{Sjolander2020}, we repeat the experiments above using DV's parameter values that are 15 \% larger than the true values. Likewise, we use $M$ and $m$ values that are, respectively, 15 \% larger and 15 \% smaller than the true values. As discussed in Section \ref{sec:RR}, this should make our bounds more conservative. Likewise for DV's bounds \citep{Sjolander2020}. Tables \ref{tab:simulation1conservative} and \ref{tab:simulation2conservative} report exactly how much more conservative the bounds become. Specifically, the columns $\overline{\Delta}$ and $\Delta$ show that the bounds are slightly more conservative than before but not much, which leads us to conclude that neither DV’s nor our bounds are overly sensitive to conservative estimates of the parameters. Still, our bounds are tighter than DV's in a considerable fraction of the runs, as shown in the column $\overline{p}$ in Tables \ref{tab:simulation1conservative} and \ref{tab:simulation2conservative}. That $\overline{p}$ is slightly smaller in these tables than in Tables \ref{tab:simulation1} and \ref{tab:simulation2} can arguably be attributed to the experimental setting being advantageous for DV's bounds. Our argument is as follows. One of DV's sensitivity parameters (see Appendix B) is
\[
RR_{UD} = \max_e \frac{\max_u p(D=1 | E=e, U=u)}{\min_u p(D=1 | E=e, U=u)}.
\]
Then, $RR_{UD} \leq \frac{M}{m}$. Consider those simulations where $RR_{UD} = \frac{M}{m}$. In those simulations, $M$ and $m$ are replaced by the conservative estimates $1.15 \cdot M$ and $0.85 \cdot m$ to compute our bounds. So, $\frac{M}{m}$ corresponds to $\frac{1.15}{0.85} \frac{M}{m}$ in those simulations. Thus, one may argue that $RR_{UD}$ should be replaced by $\frac{1.15}{0.85} RR_{UD}$ to compute DV's bounds in those simulations. However, it is replaced by the less conservative $1.15 \cdot RR_{UD}$. Alternatively, one may argue that using $1.15 \cdot RR_{UD}$ in those simulations corresponds to using $1.15 \cdot M$ and $m$, instead of the more conservative $0.85 \cdot m$.

\section{Bounds for Mediation}\label{sec:mediation}

So far, we have focused on bounding the total causal effect of the exposure $E$ on the outcome $D$. However, if the relationship between $E$ and $D$ is mediated by some measured covariates $Z$, then it may also be interesting to bound the natural direct and indirect effects. This section adapts our sensitivity analysis method for this purpose. Specifically, we have previously considered the causal graph to the left in Figure \ref{fig:graphs}. We now consider the refined causal graph to the right in the figure. Note that there is unmeasured exposure-outcome confounding ($U$), but no unmeasured exposure-mediator or mediator-outcome confounding.\footnote{That we assume no mediator-outcome confounding does not mean that we assume randomization of the mediator, because that is infeasible.} We defer the study of the latter two cases to a future work. \citet{DingandVanderWeele2016b} adapted DV's method to bound the natural direct and indirect effects under unmeasured mediator-outcome confounding but no unmeasured exposure-mediator or exposure-outcome confounding, which is always justified when the exposure is randomly assigned. In non-randomized studies like our work, the type of unmeasured confounding assumed can only be justified by substantive knowledge.

As before, let $D$ and $E$ be binary, and $Z$ and $U$ be categorical. The true natural direct effect is defined as
\[
RR^{true}_{NDE} = \frac{p(D_{1 Z_0}=1)}{p(D_{0 Z_0}=1)}
\]
where $Z_{e'}$ denotes the counterfactual value of the mediator when the exposure is set to level $E=e'$, and $D_{e Z_{e'}}$ denotes the counterfactual outcome when the exposure and mediator are set to levels $E=e$ and $Z_{e'}$, respectively. Note that the causal graph to the right in Figure \ref{fig:graphs} implies cross-world counterfactual independence, i.e. $D_{ez} \ci Z_{e'}$ for all $e$, $e'$ and $z$. \citet{Pearl2001} shows that we can then write
\[
RR^{true}_{NDE} = \frac{\sum_z p(D_{1 z}=1) p(Z_0=z)}{\sum_z p(D_{0 z}=1) p(Z_0=z)}.
\]
Since there is no exposure-mediator confounding, we have that $Z_{e} \ci E$ for all $e$ and, thus, we can write
\begin{equation}\label{eq:RRtrueNDE}
RR^{true}_{NDE} = \frac{\sum_z p(D_{1 z}=1) p(Z=z|E=0)}{\sum_z p(D_{0 z}=1) p(Z=z|E=0)}
\end{equation}
using the law of counterfactual consistency. Since there is no unmeasured confounding besides $U$, we have that $D_{ez} \ci (E, Z) | U$ for all $e$ and $z$ and, thus, we can write
\[
RR^{true}_{NDE} = \frac{\sum_z \sum_u p(D=1|E=1, Z=z, U=u) p(U=u) p(Z=z|E=0)}{\sum_z \sum_u p(D=1|E=0, Z=z, U=u) p(U=u) p(Z=z|E=0)}
\]
using first the law of total probability, then $D_{ez} \ci (E, Z) | U$ and, finally, the law of counterfactual consistency. For the previous quantity to be well-defined, we make the positivity assumption that if $p(U=u) > 0$ then $p(E=e|U=u) > 0$ and $p(Z=z|E=e) > 0$. Still, the previous quantity is incomputable. We give below bounds on $RR^{true}_{NDE}$ in terms of the observed data distribution and two sensitivity parameters.

We start by noting that
\begin{align*}
p(D_{1z}=1) & = p(D_{1z}=1 | E=1) p(E=1) + p(D_{1z}=1 | E=0) p(E=0)\\
& = p(D_{1z}=1 | E=1, Z=z) p(E=1) + p(D_{1z}=1 | E=0, Z=z) p(E=0)\\
& = p(D=1 | E=1, Z=z) p(E=1) + p(D_{1z}=1 | E=0, Z=z) p(E=0)
\end{align*}
where the second equality follows $D_{e,z} \ci Z | E$ for all $e$ and $z$, and the third from counterfactual consistency. Moreover,
\begin{align*}
p(D_{1z}=1 | E=0, Z=z) & = \sum_u p(D_{1z}=1 | E=0, Z=z, U=u) p(U=u | E=0, Z=z)\\
& = \sum_u p(D=1 | E=1, Z=z, U=u) p(U=u | E=0, Z=z)\\
& \leq \max_{e,z,u} p(D=1 | E=e, Z=z, U=u)
\end{align*}
where the second equality follows from $D_{ez} \ci (E,Z) | U$ for all $e$ and $z$, and counterfactual consistency. Likewise,
\[
p(D_{1z}=1 | E=0, Z=z) \geq \min_{e,z,u} p(D=1 | E=e, Z=z, U=u).
\]
Now, let us define
\[
M=\max_{e,z,u} p(D=1 | E=e, Z=z, U=u)
\]
and
\[
m=\min_{e,z,u} p(D=1 | E=e, Z=z, U=u).
\]
Then,
\begin{align}\nonumber
& p(D=1 | E=1, Z=z) p(E=1) + p(E=0) m \leq p(D_{1z}=1)\\\label{eq:numeratorNDE}
& \leq p(D=1 | E=1, Z=z) p(E=1) + p(E=0) M.
\end{align}
Likewise,
\[
p(D_{0z}=1) = p(D_{0z}=1 | E=1, Z=z) p(E=1) + p(D=1 | E=0, Z=z) p(E=0)
\]
and, thus,
\begin{align}\nonumber
& p(D=1 | E=0, Z=z) p(E=0) + p(E=1) m \leq p(D_{0z}=1)\\\label{eq:denominatorNDE}
& \leq p(D=1 | E=0, Z=z) p(E=0) + p(E=1) M.
\end{align}
Therefore, combining Equations \ref{eq:RRtrueNDE}-\ref{eq:denominatorNDE}, we have that
\begin{equation}\label{eq:boundsNDE}
LB_{NDE} \leq RR^{true}_{NDE} \leq UB_{NDE}
\end{equation}
where
\[
LB_{NDE} = \frac{\sum_z [ p(D=1 | E=1, Z=z) p(E=1) + p(E=0) m ] p(Z=z|E=0)}{\sum_z [ p(D=1 | E=0, Z=z) p(E=0) + p(E=1) M ] p(Z=z|E=0)}
\]
and
\[
UB_{NDE} = \frac{\sum_z [ p(D=1 | E=1, Z=z) p(E=1) + p(E=0) M ] p(Z=z|E=0)}{\sum_z [ p(D=1 | E=0, Z=z) p(E=0) + p(E=1) m ] p(Z=z|E=0)}.
\]
As before, $M$ and $m$ are two sensitivity parameters whose values the analyst has to set. The feasible region for these parameters is $M^* \leq M \leq 1$ and $0 \leq m \leq m^*$ where
\[
M^*=\max_{e, z} p(D=1 | E=e, Z=z)
\]
and
\[
m^*=\min_{e, z} p(D=1 | E=e, Z=z).
\]

Finally, the true natural indirect effect is defined as
\[
RR^{true}_{NIE} = \frac{p(D_{1 Z_1}=1)}{p(D_{1 Z_0}=1)}.
\]
Under the assumptions above, we can write
\[
RR^{true}_{NDE} = \frac{\sum_z \sum_u p(D=1|E=1, Z=z, U=u) p(U=u) p(Z=z|E=1)}{\sum_z \sum_u p(D=1|E=1, Z=z, U=u) p(U=u) p(Z=z|E=0)}.
\]
Repeating the reasoning above, we can bound the incomputable $RR^{true}_{NIE}$ in terms of the observed data distribution and the sensitivity parameters $M$ and $m$ as follows:
\[
LB_{NIE} \leq RR^{true}_{NIE} \leq UB_{NIE}
\]
where
\[
LB_{NIE} = \frac{\sum_z [ p(D=1 | E=1, Z=z) p(E=1) + p(E=0) m ] p(Z=z|E=1)}{\sum_z [ p(D=1 | E=1, Z=z) p(E=1) + p(E=0) M ] p(Z=z|E=0)}
\]
and
\[
UB_{NIE} = \frac{\sum_z [ p(D=1 | E=1, Z=z) p(E=1) + p(E=0) M ] p(Z=z|E=1)}{\sum_z [ p(D=1 | E=1, Z=z) p(E=1) + p(E=0) m ] p(Z=z|E=0)}.
\]

Theorem \ref{the:attainableNDE} in Appendix A shows that our bounds for $RR^{true}_{NDE}$ are arbitrarily sharp. One can prove the result for $RR^{true}_{NIE}$ in much the same way. One can also extend our bounds to the risk difference scale, and to conditioning and averaging over measured covariates. We omit the details.

\section{Discussion}\label{sec:discussion}

In this work, we have introduced a new method for assessing the sensitivity of the risk ratio to unmeasured confounding. Our method requires the analyst to set two intuitive parameters. Otherwise, our method makes no parametric or modelling assumptions about the causal relationships under consideration. The resulting bounds of the risk ratio are guaranteed to be arbitrarily sharp. Moreover, we have adapted our method to bound the risk difference and the natural direct and indirect effects, even when conditioning or averaging over measured covariates. We have illustrated our method on real data, and shown via simulations that it can produce tighter bounds than DV's method \citep{DingandVanderWeele2016a}. Therefore, it may be a good practice to apply both methods and report the tightest bounds obtained. This presumes that the analyst knows the true sensitivity parameter values for both methods or, more realistically, some conservative estimates of them. Otherwise, there is no reason to prefer the tightest bounds, as they may exclude the true risk value. For the same reason, if the analyst can confidently produce conservative estimates for one method but not for the other, then it may be sensible to just use the former method. Recall that our method requires to estimate two probabilities, whereas DV's method requires to estimate three probability ratios. For which method the analyst can confidently produce conservative estimates may well depend on the domain under study. Therefore, we believe that our method and DV's complement each other, combined or separately.

Our bounds on the natural direct and indirect effects assume that there is only unmeasured exposure-outcome confounding. In the future, we would like to extend them to other types of confounding.

\section*{Acknowledgements}
We thank the Associate Editor and Reviewers for their comments, which helped us to improve our work. We gratefully acknowledge financial support from the Swedish Research Council (ref. 2019-00245).

\bibliographystyle{plainnat}
\bibliography{sensitivityAnalysis}

\begin{thebibliography}{11}
\providecommand{\natexlab}[1]{#1}
\providecommand{\url}[1]{\texttt{#1}}
\expandafter\ifx\csname urlstyle\endcsname\relax
  \providecommand{\doi}[1]{doi: #1}\else
  \providecommand{\doi}{doi: \begingroup \urlstyle{rm}\Url}\fi

\bibitem[Blum et~al.(2020)Blum, Tan, and Ioannidis]{Blumetal.2020}
M.~R. Blum, Y.~J. Tan, and J.~P.~A. Ioannidis.
\newblock {Use of E-Values for Addressing Confounding in Observational Studies
  -- An Empirical Assessment of the Literature}.
\newblock \emph{International Journal of Epidemiology}, 49:\penalty0
  1482--1494, 2020.

\bibitem[Ding and VanderWeele(2016a)]{DingandVanderWeele2016a}
P.~Ding and T.~J. VanderWeele.
\newblock {Sensitivity Analysis Without Assumptions}.
\newblock \emph{Epidemiology}, 27:\penalty0 368--377, 2016a.

\bibitem[Ding and VanderWeele(2016b)]{DingandVanderWeele2016b}
P.~Ding and T.~J. VanderWeele.
\newblock {Sharp Sensitivity Bounds for Mediation under Unmeasured
  Mediator-Outcome Confounding}.
\newblock \emph{Biometrika}, 103:\penalty0 483--490, 2016b.

\bibitem[Hammond and Horn(1958)]{HammondandHorn1958}
E.~C. Hammond and D.~Horn.
\newblock {Smoking and Death Rates -- Report on Forty Four-Months of Follow-Up
  of 187,783 Men}.
\newblock \emph{Journal of the American Medical Association}, 166:\penalty0
  1159--1172, 1294--1308, 1958.

\bibitem[Hern\'an and Robins(2020)]{HernanandRobins2020}
M.~A. Hern\'an and J.~M. Robins.
\newblock \emph{{Causal Inference: What If}}.
\newblock Chapman \& Hall/CRC, 2020.

\bibitem[Manski(1990)]{Manski1990}
C.~F. Manski.
\newblock {Nonparametric Bounds on Treatment Effects}.
\newblock \emph{The American Economic Review}, 80:\penalty0 319--323, 1990.

\bibitem[Ogburn and VanderWeele(2012)]{OgburnandVanderWeele2012a}
E.~L. Ogburn and T.~J. VanderWeele.
\newblock {On the Nondifferential Misclassification of a Binary Confounder}.
\newblock \emph{Epidemiology}, 23:\penalty0 433--439, 2012.

\bibitem[Pearl(2001)]{Pearl2001}
J.~Pearl.
\newblock {Direct and Indirect Effects}.
\newblock In \emph{Proceedings of the 17th Conference on Uncertainty in
  Artificial Intelligence}, pages 411--420, 2001.

\bibitem[Sj\"olander(2020)]{Sjolander2020}
A.~Sj\"olander.
\newblock {A Note on a Sensitivity Analysis for Unmeasured Confounding, and the
  Related E-Value}.
\newblock \emph{Journal of Causal Inference}, 8:\penalty0 229--248, 2020.

\bibitem[VanderWeele and Ding(2017)]{VanderWeeleandDing2017}
T.~J. VanderWeele and P.~Ding.
\newblock {Sensitivity Analysis in Observational Research: Introducing the
  E-Value}.
\newblock \emph{Annals of Internal Medicine}, 167:\penalty0 268--274, 2017.

\bibitem[VanderWeele et~al.(2019)VanderWeele, Ding, and
  Mathur]{VanderWeeleetal.2019}
T.~J. VanderWeele, P.~Ding, and M.~Mathur.
\newblock {Technical Considerations in the Use of the E-Value}.
\newblock \emph{Journal of Causal Inference}, 7:\penalty0 1--11, 2019.

\end{thebibliography}

\section*{Appendix A: Theorems}

\begin{theorem}\label{the:attainableRR}
The bounds in Equation \ref{eq:bounds} are arbitrarily sharp.
\end{theorem}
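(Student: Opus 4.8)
The plan is to show that each of the two bounds $LB$ and $UB$ in Equation \ref{eq:bounds} is arbitrarily sharp in the stated sense: given $\varepsilon>0$, I will produce a distribution $p'(D,E,U)$ compatible with the causal graph to the left in Figure \ref{fig:graphs} (in particular, satisfying positivity) whose observed margin $p'(D,E)$ --- the only part identified from data --- differs from $p(D,E)$ by less than $\varepsilon$ and for which $RR^{true}$ differs from $UB$ by less than $\varepsilon$; the construction for $LB$ is symmetric. Since $RR^{true}$, expressed via Equations \ref{eq:counterfactual1} and \ref{eq:counterfactual2}, depends continuously on $p(D,E)$, $p(D_1=1\mid E=0)$ and $p(D_0=1\mid E=1)$ on the region where the denominator $p(D_0=1)$ is bounded away from $0$ (as it is near the observed distribution, $RR^{true}$ being assumed well defined), it suffices to exhibit a one-parameter family $p_\delta$, $\delta\in(0,\delta_0)$, with $p_\delta(D,E)\rightarrow p(D,E)$, $p_\delta(D_1=1\mid E=0)\rightarrow M$, and $p_\delta(D_0=1\mid E=1)\rightarrow m$ as $\delta\rightarrow 0$; then $RR^{true}_{p_\delta}\rightarrow UB$ and any small enough $\delta$ does the job.

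For the construction I would take $U$ binary and strongly anti-associated with $E$: fix $p_\delta(E=1)=p(E=1)$ and set $p_\delta(U=1\mid E=1)=p_\delta(U=2\mid E=0)=1-\delta$ and $p_\delta(U=2\mid E=1)=p_\delta(U=1\mid E=0)=\delta$, so positivity holds for every $\delta\in(0,1)$. The idea is that level $U=2$ is negligible in the $E=1$ stratum but dominates the $E=0$ stratum (and vice versa for $U=1$), so the value of $p_\delta(D=1\mid E=1,U=2)$ barely affects the observed $p_\delta(D=1\mid E=1)$ yet essentially determines the counterfactual $p_\delta(D_1=1\mid E=0)$. Accordingly I put $p_\delta(D=1\mid E=1,U=2)=M$ and $p_\delta(D=1\mid E=0,U=1)=m$, and I define the remaining two conditionals $p_\delta(D=1\mid E=1,U=1)$ and $p_\delta(D=1\mid E=0,U=2)$ by clamping $(p(D=1\mid E=1)-\delta M)/(1-\delta)$ and $(p(D=1\mid E=0)-\delta m)/(1-\delta)$, respectively, to the interval $[m,M]$. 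Then by construction $\max_{e,u}p_\delta(D=1\mid E=e,U=u)=M$ and $\min_{e,u}p_\delta(D=1\mid E=e,U=u)=m$, a short computation gives $p_\delta(D=1\mid E=e)\rightarrow p(D=1\mid E=e)$ (indeed with equality whenever no clamping occurs), and $p_\delta(D_1=1\mid E=0)=\delta\,p_\delta(D=1\mid E=1,U=1)+(1-\delta)M\rightarrow M$ while $p_\delta(D_0=1\mid E=1)=(1-\delta)m+\delta\,p_\delta(D=1\mid E=0,U=2)\rightarrow m$, as wanted; so by Equations \ref{eq:counterfactual1} and \ref{eq:counterfactual2} we get $RR^{true}_{p_\delta}\rightarrow UB$. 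For $LB$ I keep the same $U$--$E$ law and only swap the pinned values to $p_\delta(D=1\mid E=1,U=2)=m$ and $p_\delta(D=1\mid E=0,U=1)=M$, which forces $p_\delta(D_1=1\mid E=0)\rightarrow m$ and $p_\delta(D_0=1\mid E=1)\rightarrow M$ and hence $RR^{true}_{p_\delta}\rightarrow LB$.

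The verifications --- that the clamped conditionals lie in $[0,1]$, that the discrepancies $\|p_\delta(D,E)-p(D,E)\|$, $M-p_\delta(D_1=1\mid E=0)$ and $p_\delta(D_0=1\mid E=1)-m$ are all $O(\delta)$, and the passage to the limit in $RR^{true}$ --- are routine. I expect the only real obstacle to be the boundary of the feasible region, which is precisely where the clamp earns its keep: when $M=\max_e p(D=1\mid E=e)$ the unclamped quantity $(p(D=1\mid E=0)-\delta m)/(1-\delta)$ can exceed $M$, and when $m=\min_e p(D=1\mid E=e)$ the unclamped quantity $(p(D=1\mid E=1)-\delta M)/(1-\delta)$ can drop below $m$; clamping to $[m,M]$ restores feasibility at the price of an extra $O(\delta)$ perturbation of the observed margin, which is harmless since we only need $p_\delta(D,E)$ to \emph{converge} to $p(D,E)$. (Exact margin matching could instead be arranged by adding a third $U$-level of mass $O(\delta)$ to absorb the discrepancy, but that refinement is unnecessary for ``arbitrarily sharp''.) Combining the two families proves that both bounds in Equation \ref{eq:bounds} are arbitrarily sharp.
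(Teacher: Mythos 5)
Your construction is correct and is essentially the paper's own proof: both take a binary $U$ that is nearly deterministic given $E$ (mixing parameter $\epsilon$ in the paper, $\delta$ in yours), pin the conditional $p(D=1\mid E=1,U=u)$ on the level that dominates the $E=0$ stratum and $p(D=1\mid E=0,U=u)$ on the level that dominates the $E=1$ stratum to $m$ and $M$ (or $M$ and $m$ for the other bound), and pass to the limit. The only difference is cosmetic: the paper sets the two dominant-stratum conditionals directly to the observed $p'(D=1\mid E=e)$, which lie in $[m,M]$ automatically by feasibility --- so no clamping is needed --- and accepts an $O(\epsilon)$ perturbation of the margin, whereas you solve for exact margin matching and clamp.
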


\begin{proof}
Let the set $\{M', m', p'(D,E)\}$ represent the observed data distribution and sensitivity parameter values at hand. We assume that $M'$ and $m'$ belong to the feasible region. To show that the lower bound is arbitrarily sharp, we construct a distribution $p(D,E,U)$ that marginalizes to the set $\{M, m, p(D,E)\}$ such that (i) $\{M, m, p(D,E)\}$ and $\{M', m', p'(D,E)\}$ are arbitrarily close, and (ii) $LB$ and $RR^{true}$ are arbitrarily close.
\begin{itemize}
    \item Let $p(E)=p'(E)$.
    \item Let $U$ be binary with $p(U=1|E=1)=p(U=0|E=0)=1-\epsilon$ where $\epsilon$ is an arbitrary number such that $0<\epsilon<1$. The purpose of $\epsilon$ is to ensure that the positivity assumption holds.
    \item Let 
\begin{align*}
p(D=1|E=1,U=1)&=p'(D=1|E=1)\\
p(D=1|E=1,U=0)&=m'\\
p(D=1|E=0,U=1)&=M'\\
p(D=1|E=0,U=0)&=p'(D=1|E=0).
\end{align*}   
\end{itemize}
Note that $M' \geq \max_e p'(D=1|E=e)$ and $m' \leq \min_e p'(D=1|E=e)$, because $M'$ and $m'$ belong to the feasible region. Then, $M=M'$ and $m=m'$. Note also that
\[
p(U=1)=\sum_e p(U=1|E=e) p(E=e)=\epsilon \: p(E=0) + (1-\epsilon) \: p(E=1)
\]
and, thus, $p(U=1)$ can be made arbitrarily close to $p(E=1)$ by choosing $\epsilon$ sufficiently close to 0. Likewise,
\begin{align*}
p(D=1|E=1) &= \sum_u p(D=1|E=1,U=u) p(U=u|E=1)\\
&=\epsilon \: m' + (1-\epsilon) \: p'(D=1|E=1)
\end{align*}
and, thus, $p(D=1|E=1)$ can be made arbitrarily close to $p'(D=1|E=1)$ by choosing $\epsilon$ sufficiently close to 0. Likewise for $p(D=1|E=0)$ and $p'(D=1|E=0)$. Therefore, $LB$ and $RR^{true}$ can be made arbitrarily close by choosing $\epsilon$ sufficiently close to 0:
\begin{align*}
LB &= \frac{p(D=1 | E=1) p(E=1) + p(E=0) m}{p(D=1 | E=0) p(E=0) + p(E=1) M}\\
&\approx \frac{p'(D=1 | E=1) p(U=1) + p(U=0) m'}{p'(D=1 | E=0) p(U=0) + p(U=1) M'}\\
&= \frac{p(D=1 | E=1, U=1) p(U=1) + p(U=0) p(D=1 | E=1, U=0)}{p(D=1 | E=0,U=0) p(U=0) + p(U=1) p(D=1 | E=0, U=1)}\\
&= RR^{true}.
\end{align*} 

That the upper bound is arbitrarily sharp can be proven analogously, after the swap $p(D=1|E=1,U=0)=M'$ and $p(D=1|E=0,U=1)=m'$.
\end{proof}

\begin{theorem}\label{the:attainableRD}
The bounds in Equation \ref{eq:bounds2} are arbitrarily sharp.
\end{theorem}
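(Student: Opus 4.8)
The plan is to reuse verbatim the distribution constructed in the proof of Theorem~\ref{the:attainableRR}. Let $\{M', m', p'(D,E)\}$ denote the observed data distribution and sensitivity parameter values at hand, with $M'$ and $m'$ in the feasible region. For the lower bound $LB^{\dagger}$, take exactly the $p(D,E,U)$ defined there: $p(E)=p'(E)$; $U$ binary with $p(U=1|E=1)=p(U=0|E=0)=1-\epsilon$ for an arbitrary $\epsilon\in(0,1)$; and $p(D=1|E=1,U=1)=p'(D=1|E=1)$, $p(D=1|E=1,U=0)=m'$, $p(D=1|E=0,U=1)=M'$, $p(D=1|E=0,U=0)=p'(D=1|E=0)$. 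As already shown in the proof of Theorem~\ref{the:attainableRR}, this distribution satisfies $M=M'$ and $m=m'$, and its marginal $p(D,E)$ can be made arbitrarily close to $p'(D,E)$ by taking $\epsilon$ close to $0$; hence $\{M,m,p(D,E)\}$ and $\{M',m',p'(D,E)\}$ are arbitrarily close.

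It then remains to check that $RD^{true}$ and $LB^{\dagger}$ are arbitrarily close for this construction. Since $p(U=1)=\epsilon\, p(E=0)+(1-\epsilon)\, p(E=1)$ can be made arbitrarily close to $p(E=1)$, we have
\begin{align*}
p(D_1=1) &= p(D=1|E=1,U=1)p(U=1)+p(D=1|E=1,U=0)p(U=0)\\
&\approx p'(D=1|E=1)p(E=1)+m'\, p(E=0)=p(D=1,E=1)+p(E=0)m,
\end{align*}
so $p(D_1=1)$ approaches the lower end of Equation~\ref{eq:numerator}, and likewise
\begin{align*}
p(D_0=1) &= p(D=1|E=0,U=1)p(U=1)+p(D=1|E=0,U=0)p(U=0)\\
&\approx M'\, p(E=1)+p'(D=1|E=0)p(E=0)=p(D=1,E=0)+p(E=1)M,
\end{align*}
so $p(D_0=1)$ approaches the upper end of Equation~\ref{eq:denominator}. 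Subtracting, $RD^{true}=p(D_1=1)-p(D_0=1)$ approaches $[p(D=1,E=1)+p(E=0)m]-[p(D=1,E=0)+p(E=1)M]=LB^{\dagger}$ as $\epsilon\to 0$, which is what we want.

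The upper bound $UB^{\dagger}$ is handled by the same swap used at the end of the proof of Theorem~\ref{the:attainableRR}: set instead $p(D=1|E=1,U=0)=M'$ and $p(D=1|E=0,U=1)=m'$, so that $p(D_1=1)$ approaches $p(D=1,E=1)+p(E=0)M$ and $p(D_0=1)$ approaches $p(D=1,E=0)+p(E=1)m$, whence $RD^{true}\to UB^{\dagger}$. I do not expect any real obstacle beyond what is already dispatched for the risk ratio: the single distribution built there simultaneously pins $p(D_1=1)$ and $p(D_0=1)$ to opposite extremes of Equations~\ref{eq:numerator} and~\ref{eq:denominator}, and the risk difference is merely their difference, so the same $\epsilon\to 0$ limiting argument carries over with the quotient replaced by a subtraction. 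The only point worth noting explicitly is that taking a difference rather than a ratio still preserves the ``arbitrarily close'' conclusion, which is immediate from continuity.
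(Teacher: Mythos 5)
Your proposal is correct and follows essentially the same route as the paper: reuse the distribution from the proof of Theorem \ref{the:attainableRR} (with the same swap for the upper bound) and let $\epsilon\to 0$. The only cosmetic difference is that you track $p(D_1=1)$ and $p(D_0=1)$ separately against the extremes of Equations \ref{eq:numerator} and \ref{eq:denominator} before subtracting, whereas the paper rearranges $LB^{\dagger}$ directly into $RD^{true}$; the underlying computation is identical.
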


\begin{proof}
Consider the same distribution as in the proof of Theorem \ref{the:attainableRR}. To show that the lower bound is arbitrarily sharp, it suffices to note that $LB^{\dagger}$ and $RD^{true}$ can be made arbitrarily close by choosing $\epsilon$ sufficiently close to 0:
\begin{align*}
LB^{\dagger} &= [m - p(D=1 | E=0) ] p(E=0) + p(E=1) [ p(D=1 | E=1) - M ]\\
&\approx [ m' - p'(D=1 | E=0) ] p(U=0) + p(U=1) [ p'(D=1 | E=1) - M']\\
&= [ p(D=1 | E=1, U=0) - p(D=1 | E=0, U=0) ] p(U=0)\\
&+ p(U=1) [ p(D=1 | E=1, U=1) - p(D=1 | E=0, U=1)]\\
&= RD^{true}.
\end{align*} 
That the upper bound is arbitrarily sharp can be proven analogously, after the swap $p(D=1|E=1,U=0)=M'$ and $p(D=1|E=0,U=1)=m'$.
\end{proof}

\begin{theorem}\label{the:attainableNDE}
The bounds in Equation \ref{eq:boundsNDE} are arbitrarily sharp.
\end{theorem}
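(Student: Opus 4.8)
The plan is to imitate the proofs of Theorems~\ref{the:attainableRR} and~\ref{the:attainableRD}. Fix the data and parameter values at hand, $\{M', m', p'(D,E,Z)\}$, with $M'$ and $m'$ in the feasible region. I will build a joint distribution $p(D,E,Z,U)$ that is Markov with respect to the causal graph to the right in Figure~\ref{fig:graphs} --- so that all the counterfactual identities derived in Section~\ref{sec:mediation} are available for it --- whose observable margin $\{M, m, p(D,E,Z)\}$ is arbitrarily close to $\{M', m', p'(D,E,Z)\}$, and for which $LB_{NDE}$ and $RR^{true}_{NDE}$ are arbitrarily close. The upper bound follows after a swap of two conditionals, exactly as before, and $RR^{true}_{NIE}$ is treated the same way, replacing $p(Z=z|E=0)$ by $p(Z=z|E=1)$ in the numerator.

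Concretely, I would let $p(E)=p'(E)$ and $p(Z=z|E=e,U=u)=p'(Z=z|E=e)$, so the absence of edges between $U$ and $Z$ is respected and the support of $p(Z|E)$ --- hence positivity --- is preserved; let $U$ be binary with $p(U=1|E=1)=p(U=0|E=0)=1-\epsilon$ for an arbitrary $0<\epsilon<1$, which secures positivity of $p(E|U)$; and, for every $z$, set
\begin{align*}
p(D=1|E=1,Z=z,U=1)&=p'(D=1|E=1,Z=z), & p(D=1|E=1,Z=z,U=0)&=m',\\
p(D=1|E=0,Z=z,U=1)&=M', & p(D=1|E=0,Z=z,U=0)&=p'(D=1|E=0,Z=z).
\end{align*}
Because $M'$ and $m'$ lie in the feasible region, i.e.\ $M'\geq\max_{e,z}p'(D=1|E=e,Z=z)$ and $m'\leq\min_{e,z}p'(D=1|E=e,Z=z)$, taking the maximum and the minimum of these four families of numbers yields $M=M'$ and $m=m'$ exactly.

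Letting $\epsilon\to 0$, one verifies as in the earlier proofs that $p(U=1)\to p(E=1)$ and $p(D=1|E=e,Z=z)\to p'(D=1|E=e,Z=z)$, while $p(Z=z|E=0)=p'(Z=z|E=0)$ does not depend on $\epsilon$; hence the observable margin is approximated to any desired accuracy. It then remains to pass to the limit inside $LB_{NDE}$. Its numerator tends to $\sum_z [\, p(D=1|E=1,Z=z,U=1)\,p(U=1) + p(U=0)\,p(D=1|E=1,Z=z,U=0)\,]\,p(Z=z|E=0)$, which by $D_{ez}\ci(E,Z)|U$, counterfactual consistency and the cross-world independence invoked in Section~\ref{sec:mediation} equals $\sum_z p(D_{1z}=1)\,p(Z_0=z)=p(D_{1Z_0}=1)$; its denominator tends to $\sum_z p(D_{0z}=1)\,p(Z_0=z)=p(D_{0Z_0}=1)$ by the same argument. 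Therefore $LB_{NDE}\to p(D_{1Z_0}=1)/p(D_{0Z_0}=1)=RR^{true}_{NDE}$, which is the claim for the lower bound.

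The steps needing care --- and the main obstacle --- are, first, the bookkeeping that ties the construction to the feasible region: this is exactly where the hypothesis $M^*\leq M'\leq 1$ and $0\leq m'\leq m^*$ enters, forcing $M=M'$ and $m=m'$; and, second, checking that the constructed $p(D,E,Z,U)$ really entails every independence of the graph, in particular the cross-world independence $D_{ez}\ci Z_{e'}$ that Section~\ref{sec:mediation} relies on --- this holds because the distribution is Markov to the DAG and may be realized by its canonical non-parametric structural equation model with independent errors. For the upper bound I would instead set $p(D=1|E=1,Z=z,U=0)=M'$ and $p(D=1|E=0,Z=z,U=1)=m'$, keeping the other two conditionals, and repeat the computation with $LB_{NDE}$ replaced by $UB_{NDE}$; the natural indirect effect is obtained in the same manner.
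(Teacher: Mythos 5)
Your proposal is correct and follows essentially the same route as the paper's own proof: the same four conditionals $p(D=1|E=e,Z=z,U=u)$, the same $\epsilon$-perturbed binary $U$ with $U\ci Z\mid E$ (you specify $p(Z|E,U)=p'(Z|E)$ where the paper specifies $p(U|E,Z)$ free of $z$, which yields the identical joint distribution), the same use of the feasible region to force $M=M'$ and $m=m'$, and the same swap for the upper bound. Your added checks that the construction is Markov to the mediation graph and hence supports the cross-world independence are a welcome explicit justification of a point the paper leaves implicit.
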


\begin{proof}
Let the set $\{M', m', p'(D,E,Z)\}$ represent the observed data distribution and sensitivity parameter values at hand. We assume that $M'$ and $m'$ belong to the feasible region. To show that the lower bound is arbitrarily sharp, we construct a distribution $p(D,E,U,Z)$ that marginalizes to the set $\{M, m, p(D,E,Z)\}$ such that (i) $\{M, m, p(D,E,Z)\}$ and $\{M', m', p'(D,E,Z)\}$ are arbitrarily close, and (ii) $LB_{NDE}$ and $RR^{true}_{NDE}$ are arbitrarily close.
\begin{itemize}
    \item Let $p(E,Z)=p'(E,Z)$.
    \item Let $U$ be binary with $p(U=1|E=1, Z=z)=p(U=0|E=0, Z=z)=1-\epsilon$ for all $z$, where $\epsilon$ is an arbitrary number such that $0<\epsilon<1$. The purpose of $\epsilon$ is to ensure that the positivity assumption holds.
    \item For all $z$, let
\begin{align*}
p(D=1|E=1,Z=z,U=1)&=p'(D=1|E=1,Z=z)\\
p(D=1|E=1,Z=z,U=0)&=m'\\
p(D=1|E=0,Z=z,U=1)&=M'\\
p(D=1|E=0,Z=z,U=0)&=p'(D=1|E=0,Z=z).
\end{align*}   
\end{itemize}
Note that $M' \geq \max_{e,z} p'(D=1|E=e,Z=z)$ and $m' \leq \min_{e,z} p'(D=1|E=e,Z=z)$, because $M'$ and $m'$ belong to the feasible region. Then, $M=M'$ and $m=m'$. Note also that
\[
p(U=1)=\sum_e p(U=1|E=e) p(E=e)=\epsilon \: p(E=0) + (1-\epsilon) \: p(E=1)
\]
and, thus, $p(U=1)$ can be made arbitrarily close to $p(E=1)$ by choosing $\epsilon$ sufficiently close to 0. Likewise,
\begin{align*}
p(D=1|E=1,Z=z) &= \sum_u p(D=1|E=1,Z=z,U=u) p(U=u|E=1,Z=z)\\
&=\epsilon \: m' + (1-\epsilon) \: p'(D=1|E=1,Z=z)
\end{align*}
and, thus, $p(D=1|E=1,Z=z)$ can be made arbitrarily close to $p'(D=1|E=1,Z=z)$ by choosing $\epsilon$ sufficiently close to 0. Likewise for $p(D=1|E=0,Z=z)$ and $p'(D=1|E=0,Z=z)$. Therefore, $LB_{NDE}$ and $RR^{true}_{NDE}$ can be made arbitrarily close by choosing $\epsilon$ sufficiently close to 0:
\begin{align*}
LB_{NDE} &= \frac{\sum_z [ p(D=1| E=1, Z=z) p(E=1) + p(E=0) m ] p(Z=z|E=0)}{\sum_z [ p(D=1| E=0, Z=z) p(E=0) + p(E=1) M ] p(Z=z|E=0)}\\
&\approx \frac{\sum_z [ p'(D=1| E=1, Z=z) p(U=1) + p(U=0) m' ] p(Z=z|E=0)}{\sum_z [ p'(D=1| E=0, Z=z) p(U=0) + p(U=1) M' ] p(Z=z|E=0)}\\
&= \frac{\sum_z [ p(D=1| E=1, Z=z, U=1) p(U=1) + p(U=0) m' ] p(Z=z|E=0)}{\sum_z [ p(D=1| E=0, Z=z, U=0) p(U=0) + p(U=1) M' ] p(Z=z|E=0)}\\
&= RR^{true}_{NDE}
\end{align*}
because $p(D=1|E=1,Z=z,U=0)=m'$ and $p(D=1|E=0,Z=z,U=1)=M'$.

That the upper bound is arbitrarily sharp can be proven analogously, after the swap $p(D=1|E=1,Z=z,U=0)=M'$ and $p(D=1|E=0,Z=z,U=1)=m'$.
\end{proof}

\section*{Appendix B: DV's Sensitivity Analysis}

\citet{DingandVanderWeele2016a} prove that $RR^{true}$ can be bounded in terms of $RR^{obs}$ and the sensitivity parameters $RR_{UD}$, $RR_{E0U}$ and $RR_{E1U}$, whose values the analyst has to specify. Specifically and using the notation by \citet{Sjolander2020} for conciseness, \citeauthor{DingandVanderWeele2016a} prove that
\[
RR^{obs} / BF_1 \leq RR^{true} \leq RR^{obs} BF_0
\]
with
\[
BF_e = \frac{RR^{EeU} RR_{UD}}{RR_{EeU} + RR_{UD} -1}
\]
and where
\[
RR_{UD} = \max_e \frac{\max_u p(D=1 | E=e, U=u)}{\min_u p(D=1 | E=e, U=u)}
\]
and
\[
RR_{EeU} = \max_u \frac{p(U=u|E=e)}{p(U=u|E=1-e)}.
\]
Moreover, assume that $RR^{obs} > 1$. Otherwise, consider $1/RR^{obs}$. \citet{VanderWeeleandDing2017} define the E-value as
\[
\text{E-value} = \min_{\{RR_{E1U}, RR_{UD}\} : BF_1 \geq RR^{obs}} \max \{RR_{E1U}, RR_{UD}\}
\]
and show that
\[
\text{E-value} = RR^{obs} + \sqrt{RR^{obs} (RR^{obs} -1)}.
\]

\section*{Appendix C: AS' Sensitivity Analysis}

\citet{Sjolander2020} proposes the following parameter-free bounds of $RR^{true}$ in terms of $RR^{obs}$ and the observed data distribution:
\[
RR^{obs} / \widetilde{BF}_1 \leq RR^{true} \leq RR^{obs} \widetilde{BF}_0
\]
where
\[
\widetilde{BF}_e = \frac{p(D=1|E=1-e) p(E=1-e)+p(E=e)}{p(D=1|E=1-e) p(E=e)}.
\]
\citeauthor{Sjolander2020} also adapts the previous bounds to the risk difference scale:
\begin{equation}\label{eq:bounds2AS}
RD^{obs} - \widetilde{BF}_1^{\dagger} \leq RD^{true} \leq RD^{obs} + \widetilde{BF}_0^{\dagger}
\end{equation}
where
\[
RD^{obs} = p(D=1|E=1) - p(D=1|E=0)
\]
and
\[
\widetilde{BF}_e^{\dagger} = p(E=1-e) p(D=1|E=e) + p(E=e)(1-p(D=1|E=1-e)).
\]

\section*{Appendix D: Manski's Sensitivity Analysis}

\citet{Manski1990} bounds $RD^{true}$ under the assumption that $D_0$ and $D_1$ take values in known intervals. The bounds apply to non-binary outcomes. So, we momentarily drop the assumption that $D$ is binary. \citeauthor{Manski1990}'s bounds are derived as follows. Suppose it is known that $D_1$ takes value in the interval $[K_{10}, K_{11}]$. Then,
\[
K_{10} \leq \mathbb{E}[D_1|E=0] \leq K_{11}.
\]
Consequently,
\[
\mathbb{E}[D_1] = \mathbb{E}[D_1|E=0] p(E=0) + \mathbb{E}[D_1|E=1] p(E=1)
\]
can be bounded as
\begin{align*}
K_{10} p(E=0) + \mathbb{E}[D|E=1] p(E=1) &\leq \mathbb{E}[D_1]\\
&\leq K_{11} p(E=0) + \mathbb{E}[D|E=1] p(E=1)
\end{align*}
by counterfactual consistency. Analogous bounds can be derived for $\mathbb{E}[D_0]$ under the assumption that $D_0$ takes values within the interval $[K_{00}, K_{01}]$. Consequently,
\[
RD^{true} = \mathbb{E}[D_1] - \mathbb{E}[D_0]
\]
can be bounded as
\begin{align}\label{eq:Manski}
K_{10} p(E=0) + \mathbb{E}[D|E=1] p(E=1)\\\nonumber
- \mathbb{E}[D|E=0] p(E=0) - K_{01} p(E=1) &\leq RD^{true}\\\nonumber
&\leq K_{11} p(E=0) + \mathbb{E}[D|E=1] p(E=1)\\\nonumber
&- \mathbb{E}[D|E=0] p(E=0) - K_{00} p(E=1).
\end{align}

When the outcome is binary, as in this work, $D_0$ and $D_1$ are definitionally bounded with $K_{00}=K_{10}=0$ and $K_{01}=K_{11}=1$, and the bounds take a simpler form:
\begin{align}\label{eq:Manski2}
p(D=1|E=1) p(E=1)\\\nonumber
- p(D=1|E=0) p(E=0) - p(E=1) &\leq RD^{true}\\\nonumber
& \leq p(E=0) + p(D=1|E=1) p(E=1)\\\nonumber
&- p(D=1|E=0) p(E=0).
\end{align}
Note that these bounds coincide with AS' bounds (Equation \ref{eq:bounds2AS}) and, thus, with our bounds when $M=1$ and $m=0$ (Equation \ref{eq:bounds2}).

Note also that, when $D$ is binary, Equations \ref{eq:bounds2} and \ref{eq:Manski} coincide if we let $M=K_{01}=K_{11}$ and $m=K_{00}=K_{10}$. Therefore, one may say that our bounds are an adaptation of Manski's bounds in Equation \ref{eq:Manski} to binary outcomes. An adaptation that retains the sensitivity parameters (unlike the direct application of Manski's bounds to binary outcomes in Equation \ref{eq:Manski2}) albeit with a different meaning (they now bound $p(D|E,U)$ rather than the support of $D_0$ and $D_1$). Retaining the sensitivity parameters is important because, recall, it is thanks to these parameters that our bounds can be made tighter than AS' and, thus, than those in Equation \ref{eq:Manski2}.

\end{document}